\theoremstyle{plain}
\newtheorem{theorem}{Theorem}
\theoremstyle{definition}
\theoremstyle{remark}
\numberwithin{equation}{section}
\numberwithin{theorem}{section}
\numberwithin{figure}{section}
\numberwithin{table}{section}
\newcommand{\A}{\ensuremath{\mathcal{A }}}
\newcommand{\B}{\ensuremath{\mathcal{B }}}
\newcommand{\C}{\ensuremath{\mathcal{C }}}
\newcommand{\D}{\ensuremath{\mathcal{D }}}
\newcommand{\F}{\ensuremath{\mathcal{F }}}
\newcommand{\HTf}{\ensuremath{\mathcal{H }}}
\newcommand{\I}{\ensuremath{\mathcal{I }}}
\newcommand{\La}{\ensuremath{\mathcal{L}}}
\newcommand{\Tf}{\ensuremath{\mathcal{T }}}
\newcommand{\KR}{\ensuremath{{\mathbb R }}}
\newcommand{\KZ}{\ensuremath{{\mathbb Z }}}
\newcommand{\T}{\ensuremath{{\mathbb T }}}
\newcommand{\X}{\ensuremath{\mathbb{X }}}
\newcommand{\Aut}{\operatorname{Aut}}
\newcommand{\Ind}{\operatorname{Ind}}
\newcommand{\Cpct}{\ensuremath{\mathcal{K }}}
\newcommand{\CrPr}[3]{ \ensuremath{#1\rtimes_{#3} {#2} }}
\begin{document}

\title{Topological T-duality and T-folds}

\author[P Bouwknegt]{Peter Bouwknegt}

\address[]{
Department of Theoretical Physics,
Research School of Physical Sciences and Engineering, and 
Department of Mathematics, Mathematical Sciences Institute, 
The Australian National University, 
Canberra, ACT 0200, Australia}
\email{peter.bouwknegt@anu.edu.au, asp105@rsphysse.anu.edu.au}

\author[A S Pande]{Ashwin S. Pande}

\begin{abstract}
We explicitly construct the $C^{\ast}-$algebras arising in the
formalism of Topological T-duality due to Mathai and Rosenberg
from string-theoretic data in several key examples. 
We construct a continuous-trace algebra with an action of
$\KR^d$ unique up to exterior equivalence
from the data of a smooth $\T^d$-equivariant gerbe on a trivial
bundle $X = W \times \T^d$. 
We argue that the `noncommutative T-duals'
of Mathai and Rosenberg \cite{MR2}, should be identified with the nongeometric
backgrounds well-known in string theory. We also argue that the
$C^{\ast}$-algebra $\CrPr{\A}{\KZ^d}{\alpha|_{\KZ^d}}$ should 
be identified with the T-folds of Hull \cite{Hull, HullTf}
which geometrize these backgrounds. 

We identify the charge group of D-branes
on T-fold backgrounds in the $C^{\ast}$-algebraic formalism of
Topological T-duality. We also study D-branes on T-fold backgrounds.
We show that the $K$-theory bundles of Ref.~\cite{herve} give a 
natural description of these objects.
\end{abstract}

\maketitle

% ===================================================================

\section{Introduction}

It has been known for a some time that string theories may be defined
on backgrounds which are not conventional geometries (See 
Refs. \cite{Hull, HullTf, Monodro, VM, SSN} and references therein).
We briefly review the construction of such backgrounds here following
Ref.~\cite{Hull}. Some of these non-geometric backgrounds may be 
constructed by considering 
`parametrized families' of string theories with the target space of
each string theory being a familiar space\footnote{We only consider
$\T^d$ as a target space here.} such as $\T^d$ or the ALE spaces.

String theories with target an $d$-dimensional torus possess
a `large' symmetry group (isomorphic to $\mathsf O(d,d,\KZ)$)
of which the `geometric' 
subgroup $\mathsf{GL}(d,\KZ) \subset \mathsf O(d,d,\KZ)$ is generated by large
diffeomorphisms of $\T^d$ while the rest are generated by T-dualities combined
with shifts of the Kalb-Ramond field. 

Now consider families of string theories parametrized by 
a space $W$, i.e.\ for each point of $W$ we consider a string theory with 
target $\T^d$.  It is clear that after moving around any topologically 
nontrivial loop in $W$, the theory must return to itself up to
a symmetry, i.e. we have a monodromy.
The continuous part of the symmetry may be gauged away and
hence the monodromy will take values in the 
large symmetry group of the theory on the $\T^d$ fiber.
If every monodromy lies in the geometric subgroup of the symmetry group then 
the parametrized family of theories is actually equivalent to a certain limit
of string theory with target a $\T^d$-bundle over $W$ with structure group 
the geometric group. 

If the monodromy does {\em not} lie in the geometric subgroup however, we 
cannot view the parametrized family as being defined by a theory on 
a geometric target spacetime. In this case it has been argued
that the target is (roughly speaking)
a $\T^d$ fibration over $W$ in which the fibers are
`glued together using T-dualities'. Such a space is not a geometry in the
ordinary sense of the term and has been termed a `T-fold' \cite{Hull,HullTf}
or a `monodrofold'  \cite{Monodro, VM}. 

Some T-folds arise as T-duals of
ordinary geometries, and hence string theory is certainly well defined on them,
even as a quantum theory. Conversely, Hull has shown that string theory 
on a general T-fold background may be defined as a theory with a 
sigma-model-like action with target a {\em geometric} background 
together with a set of constraints. He has argued that this action may
be quantised to obtain the full quantum theory associated with 
the T-fold \cite{Hull}.
This background is a (possibly non-principal) $\T^{2d}$ fibration over $W$ 
referred to in Ref.~\cite{HullTf} as the `correspondence space'.

We give a brief outline of the basic theory of T-folds here 
closely following Ref.~\cite{DBTF}, Sect.~2.
We consider a $\T^{2d}$ fibration over $W$. The formalism of 
Refs. \cite{DBTF,Hull} introduces a new indefinite metric field 
$L^{IJ}(Y)$ on this fibration. The physical spacetime is 
defined by choosing for each $y \in Y,$ a $\T^d \subseteq \T^{2d}$ which is 
null with respect to $L$. This defines a $\T^d$ fibration over $W$.
To be precise\footnote{See Ref.~\cite{DBTF} Sect. ~2.1.} 
we introduce 
projectors $\Pi^i_I,\tilde{\Pi}_{iI}$ $i=1,\ldots,n$, such that the matrix 
$$
\Pi = \left( \begin{array}{c}
\Pi^i_I \\ \tilde{\Pi}_{iI} 
\end{array} \right), i = 1,\ldots,n
$$
defines a choice of coordinates on each $\T^{2d}$ fiber in which 
$L$ has the form 
$$
\left(\begin{array}{cc}
\bf{0} & \bf{1} \\
\bf{1} & \bf{0} \\
\end{array} \right)
$$
then,
$L^{IJ} \Pi^i_I \Pi^j_J = L^{IJ} \tilde{\Pi}_{iI} \tilde{\Pi}_{jJ} = 0$.
Let $(\sigma^0,\sigma^1)$ be the coordinates on the worldsheet,
$\X^I, I = 1,\ldots,2d$, be coordinates on 
the $\T^{2d}$ fiber and $Y^A,A=1,\ldots, n$, coordinates on the base
$W$ of the fibration.  If we let
$X = \Pi \X$ and $\tilde{X} = \tilde{\Pi} \X$
then $X,Y$ are the coordinates on the physical spacetime and
$\tilde{X},Y$ are the coordinates on the T-dual spacetime.

The T-fold action is then
\begin{equation}
\La = -\frac{1}{2} \HTf_{IJ}(Y) \eta^{\alpha \beta} \partial_{\alpha}\X^I
\partial_{\beta}\X^J 
- \eta^{\alpha \beta} \I_{IA}(Y) \partial_{\alpha}\X^I
\partial_{\beta}Y^A + \La_N(Y).
\label{TFoldAction}
\end{equation}
Here $\X^I$ and $Y^A$ are worldsheet fields taking values in the base and
fiber of the fibration over $W$. Also, $\HTf$ is a family of metrics on the 
$\T^{2d}$ fiber parametrized by the base $W,$ and $\I$ is a `connection' for 
the $\T^{2d}$ fibration over $W$ (see Refs. \cite{DBTF,Hull} for details).
Also, $\La_N$ is the action for the fields with target the base $W$.
We have the explicit form 
\begin{gather}
\HTf = \left(\begin{array}{cc}
\HTf_{ij} & \HTf_i{}^j \\
\HTf^i{}_j & \HTf^{ij} \\
\end{array} \right) 
= \left(\begin{array}{cc}
G_{ij} - B_{ik} G^{kl}B_{lj} & B_{ik}G^{kj} \\
-G^{ik} B_{kj} & G^{ij} \\
\end{array} \right)\,,
\label{EqHBG}
\end{gather}
where $G$ is the metric on the $\T^{2d}$ fibers and $B$ is the
restriction of the $B$ field to the $\T^{2d}$ fibers (that is, we 
only consider the components of the metric and the $B$ field along 
the $\X$ coordinates).

The metric $G$ and $B$ field on the physical spacetime may be obtained from the
above data (Eqn. (4.9) of Ref. \cite{Hull}) as
\begin{align}
G^{ij} & = \Pi^{iI} \Pi^{jJ} \HTf_{IJ}\,, \nonumber \\
B_{ij} G^{jk} & = \tilde{\Pi}_i^I \Pi^{kJ} \HTf_{IJ}\,. 
\end{align}

Note that these are exactly the entries in the second column of the 
the matrix (Eqn.~\ref{EqHBG} above) for $\HTf$. 
It is clear that if we interchange $\Pi$ and $\tilde{\Pi},$ we will
obtain the first column of $\HTf$ above, which is exactly the T-dual
metric and $B$ field and would be associated to the T-dual spacetime. 

Thus the data of a T-fold (namely the metric $\HTf$ and the fixed choice
of projections $\Pi,\tilde{\Pi}$) determines two $\T^d$ fibrations over $W,$
each with their own metric and $B$ field,
one corresponding to the physical spacetime and the other to one of 
its T-duals. Exactly which T-dual is chosen is a matter of convention
and decided by the form of the metric $L$. In the conventions of
Ref. \cite{DBTF}, the T-fold is constructed from the original spacetime
and the one obtained by T-dualizing along all of the $\T$-orbits 
simultaneously. When both the original spacetime and the T-dual exist
as geometric backgrounds, the T-fold is the fiber product of
these two spacetimes, it is the `correspondence
space' (see below). On it, there is only the tensor field $\HTf$ 
which, as discussed above, contains the original metric and $B$ field.
However, T-folds continue to exist as
geometric spaces even when the original space or its full T-dual 
is non-geometric. An 
explicit example of this is the full T-dual $\T^3$ viewed as a
trivial $\T^2$-bundle over $\T$ with $H$-flux which
we will discuss in detail in Sect.~\ref{SecEg} below.

Now non-geometric backgrounds have also been seen in 
the $C^{\ast}$-algebraic formalism of Topological T-duality of 
Ref.~\cite{MR2}. In this paper
we study the relationship of these two formalisms to each other.
[See Ref.~\cite{SSN} for another viewpoint on this issue.]
In what follows we use the term `T-fold' to refer to the geometrization of
the nongeometric T-dual, i.e., to the (possibly non-principal) 
$\T^{2d}$ fibration over $W$ referred to as the `correspondence space'
in Ref.~\cite{HullTf}. This is the same as the correspondence space of
Refs.~\cite{MR1,MR2,BEM} (see below). 

We now outline the two formalisms that we will study in the following sections:
In the formalism of Topological T-duality of Ref.~\cite{MR2}, we begin with
a $C^{\ast}$-dynamical system $(\A,\alpha, \KR^d)$ 
such that the spectrum of $\A$ denoted by $\hat{\A}$ 
is a principal $\T^d$-bundle 
$p:X \to W$ and the $\KR^d$-action on $\A$ induces the given 
$\T^d$-action on $X=\hat{\A}$. 
Here $\hat{\A}$ is supposed to be a model for the topological type
of a target space with a $\T^d$-isometry. The cohomology class of the
$H$-flux on on the target space is modelled by the Dixmier-Douady class
of $\A$. The topological T-dual of $\A$ is \cite{MR2} the $C^{\ast}$-dynamical
system $(\CrPr{\A}{\hat{\KR}^d}{\alpha},\hat{\alpha},\hat{\KR}^d)$.
To each such $C^{\ast}$-dynamical system, there is associated
a function $f:W \to \T^{d(d-1)/2}$ termed the Mackey obstruction
map\footnote{See Sect.~\ref{SecCorr} below.} of the
system.  When this map is nullhomotopic (we say `there are no Mackey
obstructions') the spectrum $X^{\#}$ of this $C^{\ast}$-algebra is again a
principal circle bundle $q:X^{\#} \to W$ termed the T-dual bundle.
The authors of Ref.~\cite{MR2} also show that in this case 
there is a `correspondence space' 
homeomorphic to the fibered product
$X {\times}_W X^{\#}$ (the spectrum of $\CrPr{\A}{\KZ^d}{\alpha|_{\KZ^d}}$)
such that the following diagram of spaces commutes:

\begin{gather}
\xymatrix{ 
& X \times_W X^{\#} \ar[dr]_{r} \ar[dl]_{s} & \\
X \ar[dr]_{p} & & X^{\#} \ar[dl]_{q} \\
& W & \\ 
}
\label{CDMR}
\end{gather}
Here, $X,X^{\#}$ and $X \times_W X^{\#}$ are principal $\T^d$-bundles over $W$
with specified integral three-cohomology classes representing the $H$-flux, 
and all the maps are bundle projections. In addition, the pullbacks
of the $H$-fluxes along $r,s$ agree.
If there are Mackey obstructions however, the T-dual is in general a 
noncommutative space and the above diagram of spaces does not exist.
However, there is a similar diagram of $C^{\ast}$-algebras 
with the correspondence space replaced by the $C^{\ast}$-algebra
$\CrPr{\A}{\KZ^d}{\alpha|_{\KZ^d}}$.
We will argue in Sect.~\ref{SecCorr} below that this algebra is a natural
analogue of the T-folds of Hull when the T-dual is nongeometric.

The formalism of Ref.~\cite{HullTf} considers a smooth principal 
$\T^d$-bundle $p:X \to W$ (representing the target space)
together with a smooth $\T^d$-equivariant gerbe
with connection on $X$ (this is the conventional model for the $H$-flux). 
The $\T^d$-equivariance is needed in order to satisfy the conditions for 
gauging the sigma model on the target space (see Ref. \cite{Hull,HullTf}).
The authors define a certain obstruction cocycle 
$m_{\alpha \beta}^{IJ}$ which defines a class in $H^1(W,\KZ^m), m = d(d-1)/2$
with $I,J = 1,\ldots,d$, such that
the T-dual is geometric if and only if this class is zero. 
In this case, as shown in Ref.~\cite{HullTf}, we may define a smooth
correspondence space which gives a commutative diamond of spaces 
exactly as in Eqn.~\eqref{CDMR}. 
If some $m_{\alpha \beta}^{IJ}$ define nonzero cohomology classes in
$H^1(W,\KZ^m),$ a geometric T-dual does 
not exist. In this case, the correspondence space of 
Ref.~\cite{HullTf}, {\em is} a geometric space but it is a 
{\em non-principal} $\T^{2d}$-bundle over $W$. 
If we denote this correspondence space by $C$ here, we have a `diagram'
\begin{gather}
\xymatrix{ 
& C \ar@{.>}[dr] \ar[dl]_{s} & \\
X \ar[dr]_{p} & & \text{T-fold} \ar@{.>}[dl] \\
& W & \\ 
}
\label{CDHull}
\end{gather}
Here the dotted arrows are purely indicative, no quotient is implied.

Now the formalism of Ref.~\cite{HullTf} is defined
in the category of smooth manifolds while the formalism of
Ref.~\cite{MR2} is in a category of $C^{\ast}$-algebras.
A connection between these two seemingly different formalisms is of interest. 
Note that we do obtain the same correspondence space in both formalisms when, 
in one case, the Mackey obstruction vanishes and in the other, 
when all the $m^{IJ}$ are zero. 
It is natural to conjecture that the two obstructions are related.
We will show below that this is indeed the case. 
We will also construct the continuous-trace algebra associated with
certain types of $H$-flux on {\em trivial} torus bundles over a base
$W$ directly from the string theoretic data of Ref.~\cite{HullTf}.
In particular, we argue that the `non-commutative
T-duals' discussed in Ref.~\cite{MR2} should then be viewed as topological
approximations to the T-folds of Hull. 
We relate D-branes on the correspondence space to the formalism of
Topological T-duality in Sect.~\ref{SecDBCS} below.
We also compare our findings to the string-theoretic calculations
in Ref.~\cite{DBTF}.

% ===================================================================
\section{A Topological Approximation of $H$-flux\label{SecCorr}}

In this section we compare the formalisms of Mathai and Rosenberg \cite{MR1,MR2}
to the string theoretic T-dual studied by Belov, Minasian and 
Hull \cite{HullTf}. Note that the $C^{\ast}$-algebraic formalism
is only concerned with the topological aspects of T-duality, 
and hence a certain loss of information is to be expected on passing to 
it from string theory. 

We will see that the reason why the $C^{\ast}$-algebraic formulation 
of Topological T-duality agrees with string theory is due to the structure 
of the $H$-flux in string theory. It is well-known 
that the $H$-flux in string theory is required to satisfy the
WZW condition namely
$$
{\mathcal L}_{t} H = 0 \,,
$$
where $t$ is a tangent vector field on $X$ which is associated to the
torus action on $X$. As shown in Refs.~\cite{BouMa,Hull,HullTf}
this forces the $H$-flux to have the structure
$$
H = \frac{1}{6}\,p^{\ast}H_0^{IJK}\wedge \Theta_I \wedge \Theta_J 
\wedge \Theta_K + \frac{1}{2}\,p^{\ast}H_1^{IJ}\wedge \Theta_I \wedge \Theta_J 
+p^{\ast}H_2^I \wedge \Theta_I 
+p^{\ast}H_3 \,,
$$
where $\Theta_I, I=1,\ldots d$, are the components of the connection
form on $X$ and $H_k \in \Omega^k(W)$ (we use the notation of
Ref.~\cite{Hull} here, see Eqns.~(1.1), and (1.2) of that paper). 
We will see below that in several interesting cases 
this structure enables us to construct 
the $C^{\ast}$-algebraic T-dual.

In Ref.~\cite{Hull}, the authors consider a smooth $\T^d$-equivariant
gerbe with connection on $X$. In the case when the T-dual is
nongeometric, they show that the gerbe connection naturally gives
a $\T^{2d}$-bundle over $W$ with structure group $\mathsf{GL}(2d,\KZ)$.
They identify this with the correspondence space. 

Ref.~\cite{Hull}, Thm.~2.2 obtains an integral $1$-cocycle 
$m_{\alpha \beta} \in H^1(W,\wedge^2 \KZ^d)$
from a gerbe connection on $X$.
We will show below that the homotopy class of the Mackey obstruction map 
$f:W \to \T^{d(d-1)/2}$ of Ref.~\cite{MR2,herve} can be identified with 
the cohomology class of this cocycle. 

As noted in Ref.~\cite{Hull},
the class $m_{\alpha \beta}$ vanishes iff the T-dual is geometric and then the
correspondence space is the fibered product of the original space
and the T-dual. In the $C^{\ast}$-algebraic formalism, the spectrum of
$\Tf$ is exactly this fiber product when the Mackey obstruction vanishes.
Hence it seems natural to conjecture that $m_{\alpha \beta}$
is related to the Mackey obstruction.

Below, we will explicitly construct the Mackey obstruction map from the
data of a smooth equivariant gerbe and show that its homotopy class
is exactly $m_{\alpha \beta}$.
A heuristic argument for identifying the two is as follows:
In Ref.~\cite{Hull} it is shown that the component of 
the $H$-flux with two `legs' along
the torus fiber and one leg along the base determines 
classes\footnote{Ref.~\cite{Hull} after Eqn.~(2.20).}
$m_{\alpha \beta}^{IJ} \in H^1(W,\KZ)$.
Since $m^{IJ}$ is skew-symmetric in $I,J$ there will be
as many such classes as the number of elements in the basis of
$\wedge^2 \KZ^d$ determined by $I$. These give 
an element of $H^1(W,\wedge^2 \KZ^d)$.

In particular if $a^{IJ}$ is a vector-valued differential form 
representing $m_{\alpha \beta},$ the component of the $H$-flux associated
to $m_{\alpha \beta}$ is $a^{IJ} \wedge \Theta_I \wedge \Theta_J$,
where $\Theta_I$ is the connection form on $X$.
If we could construct a continuous-trace algebra $\A$ 
having Dixmier-Douady invariant $[H]$ together with
an $\KR^d$-action $\alpha$ covering the $\T^d$-action on $X$ 
thus determining a $C^{\ast}$-dynamical system $[\A,\alpha]$ on $X,$ 
then we would expect\footnote{In the notation of Ref.~\cite{MR2}, Thm.~2.3},
$m_{\alpha \beta}$ to be the cocycle representing $p_!(H)$, i.e., 
$m_{\alpha \beta} = p_! \circ F([\A,\alpha])$.
However, by the commutativity
of the diagram in Thm.~2.3 of that paper, 
$p_! \circ F = h \circ M$ and hence $m_{\alpha \beta}$ would be the
cohomology class representing the Mackey obstruction as well.

We now show that we can obtain the Mackey obstruction map itself from 
the string-theory calculation. 
We fix a basepoint $w_0$ in $W$ and the basepoint $(1,\ldots,1)$ in 
$\T^{d(d-1)/2}$.

\begin{theorem}
Given a $\T^d$-equivariant gerbe with connection on 
a smooth principal $\T^d$-bundle $p:X \to W$ 
\begin{enumerate}
\item If we fix the gauge of the gerbe connection on $X,$ 
there exists a smooth map $f:W \to \T^{d(d-1)/2}$ constructed from
the the gerbe connection. The map is natural under pullback of
gerbes. 
\item Under a gauge transformation of the gerbe connection, 
this map is multiplied by a phase $\gamma$ i.e., $f \to e^{i \gamma} f$.
\item This defines a unique based map 
$(W,w_0) \to (\T^{d(d-1)/2},(1,\ldots,1))$ 
such that the homotopy class of $f$ in $[W,\T^{d(d-1)/2}]$ is $p_!([H])$.
There is a geometric T-dual if and only if $f$ is nullhomotopic.
\end{enumerate}
\label{ThmMackey}
\end{theorem}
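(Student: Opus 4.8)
The plan is to construct $f$ explicitly from the local gerbe connection data and then check the three claims in order. First I would fix a good open cover $\{U_\alpha\}$ of $W$ over which the bundle $X$ trivializes, with transition functions $g_{\alpha\beta}:U_{\alpha\beta}\to\T^d$, and pull back the $\T^d$-equivariant gerbe with connection to obtain local data $(B_\alpha, A_{\alpha\beta}, \lambda_{\alpha\beta\gamma})$ on $X|_{p^{-1}(U_\alpha)}$. Using the WZW condition ${\mathcal L}_t H=0$ and the decomposition of $H$ displayed above, the equivariance forces the gerbe data, after a suitable choice of gauge (this is the ``fix the gauge'' hypothesis of part (1)), to descend along the fiber directions in a controlled way: on overlaps the curving $B_\alpha$ restricted to the fiber differs from $B_\beta$ by a term whose fiber-integral is governed by an integer matrix $n_{\alpha\beta}^{IJ}$, skew in $I,J$. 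Following Ref.~\cite{HullTf}, Thm.~2.2, these integers patch into the $1$-cocycle $m_{\alpha\beta}\in H^1(W,\wedge^2\KZ^d)$. I would then define $f:W\to\T^{d(d-1)/2}=\KR^{d(d-1)/2}/\KZ^{d(d-1)/2}$ by taking, on each $U_\alpha$, the fiberwise integral of the ``mixed'' component $H_1^{IJ}$ of the connection potential (the piece $a^{IJ}\wedge\Theta_I\wedge\Theta_J$), i.e.\ $f|_{U_\alpha} = \big[\int_{\T^d} (\text{local potential restricted to the fiber})\big]^{IJ} \bmod \KZ^{d(d-1)/2}$; the cocycle relation for the $A_{\alpha\beta}$ shows these local expressions glue modulo integers precisely because the jumps are the $n_{\alpha\beta}^{IJ}$. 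Naturality under pullback is immediate since every ingredient (cover, transition functions, local connection data) pulls back.

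For part (2), a gauge transformation of the gerbe connection changes $B_\alpha\mapsto B_\alpha+dE_\alpha$, $A_{\alpha\beta}\mapsto A_{\alpha\beta}+E_\beta-E_\alpha+d\log h_{\alpha\beta}$, etc. Restricting the $1$-forms $E_\alpha$ to the torus fibers and taking fiber-integrals of the relevant components produces a globally defined shift of $f$ by a map $W\to\T^{d(d-1)/2}$ coming from a single $1$-form — hence, since the $E_\alpha$ piece together into something exact along the fibers up to the $h_{\alpha\beta}$, this shift is constant along $W$ (it is a fixed element of $\T^{d(d-1)/2}$), which is exactly the phase $e^{i\gamma}$. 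I would verify that the integer matrices $n_{\alpha\beta}^{IJ}$ are themselves gauge-invariant (they are topological), so the cohomology class $m_{\alpha\beta}$, and a fortiori the homotopy class of $f$, is unchanged.

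For part (3): fixing the basepoints $w_0\mapsto(1,\dots,1)$ and dividing out by the phase ambiguity from (2) gives a well-defined based map, unique up to based homotopy. To identify its homotopy class in $[W,\T^{d(d-1)/2}]\cong H^1(W,\KZ^{d(d-1)/2})$ with $m_{\alpha\beta}$, I would use that for the target $\T^{d(d-1)/2}=K(\KZ^{d(d-1)/2},1)$ the class of $f$ is computed by pulling back the fundamental class, which by the explicit local description of $f$ is represented by the \v{C}ech cocycle of integer jumps $n_{\alpha\beta}^{IJ}$ — precisely $m_{\alpha\beta}$. Then the heuristic argument already in the text (via Thm.~2.3 of Ref.~\cite{MR2}, $p_!\circ F = h\circ M$, together with the construction of $[\A,\alpha]$ with Dixmier--Douady class $[H]$) upgrades this to $[f]=p_!([H])$, and the equivalence ``geometric T-dual exists $\iff$ $f$ nullhomotopic'' is then just the statement $m_{\alpha\beta}=0$ from Ref.~\cite{HullTf} combined with vanishing of the Mackey obstruction from Ref.~\cite{MR2}. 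The main obstacle I expect is part (1): extracting a \emph{canonical} smooth $\T^{d(d-1)/2}$-valued function (not merely a cohomology class) from the gerbe connection requires carefully choosing a gauge in which the equivariance is manifest and showing the fiber-integration is well-defined and smooth in $w\in W$ — in particular that changing the good cover or the local trivializations of $X$ only moves $f$ within its homotopy class — and reconciling the differential-geometric normalization conventions of Ref.~\cite{Hull} with the $C^{\ast}$-algebraic Mackey obstruction of Ref.~\cite{MR2}.
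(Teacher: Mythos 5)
Your construction is essentially the paper's own: you extract the $\Theta_I\wedge\Theta_J$ component $B^{IJ}_{0\alpha}$ of the curving (your fiber integral over the $(I,J)$-torus is exactly this coefficient), use the equivariance condition ${\mathcal L}(\partial/\partial\theta_I)B_\alpha=0$ to see it descends to a smooth function on $U_\alpha\subseteq W$, exponentiate, and use the integer jumps $B^{IJ}_{0\alpha}-B^{IJ}_{0\beta}=m^{IJ}_{\alpha\beta}$ from Thm.~2.2 of Ref.~\cite{HullTf} to glue; your gauge-transformation argument (shift by the curvature of a global line bundle, whose $\Theta_I\wedge\Theta_J$ component is closed hence constant) and your identification of $[f]$ with the \v{C}ech class $m_{\alpha\beta}=p_!([H])$ via $\T^{d(d-1)/2}\simeq K(\KZ^{d(d-1)/2},1)$ match the paper's use of the exponential sheaf sequence and its connecting map. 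This is the same route as the paper's proof, so no further comparison is needed.
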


\begin{proof}
We use the notation of Ref.~\cite{HullTf} throughout this proof. 
We implicitly assume the results stated therein.
Let $U_{\alpha}$ be a good cover of $W$.
Let $X_{\alpha}= U_{\alpha} \times \T^d$ be a cover of $X$.
\begin{enumerate}
\item {}From Ref.~\cite{HullTf}, Eqn.~(2.4), we see that 
the curvature form of the gerbe connection on $X$ is
$$
H = \frac{1}{6}\,p^{\ast}H_0^{IJK} \Theta_I \wedge \Theta_J \wedge \Theta_K
+ \frac{1}{2}\,p^{\ast}H_1^{IJ}  \Theta_I \wedge \Theta_J
+ p^{\ast}H_2^{I}  \Theta_I + p^{\ast}H_3
$$
where $H_k$ are $k$-forms on $W$ and $\Theta_I, I=1,\ldots,d$,
are the coordinates on the $\T^d$ fiber.
Note that the $H$-flux is a {\em fixed} three-form and as such does
not change under gauge transformations. Hence, neither do the
$H_k$.  We take $H_0 = 0$ as we are working in the context of
the paper of Belov et al.~\cite{HullTf}. This implies that $H_1$ is a 
closed form on $W$. In each patch $X_{\alpha}$, the gerbe connection 
$B_{\alpha}$ (whose curvature is $H$ above) may be written as
\begin{equation} 
B_{\alpha} = B_{2\alpha} + B_{1\alpha}^I \wedge  \Theta_I
+ \frac{1}{2}\,B_{0 \alpha}^{IJ} \wedge  \Theta_I \wedge  \Theta_J
\label{EqBEquiv}
\end{equation}
where $B_{k \alpha}$ are $k$-forms on $X_{\alpha}$.
The equivariance condition which is required for the sigma model
to be well-defined\footnote{See Ref.~\cite{HullTf}.}
${\mathcal L}(\frac{\partial}{\partial \theta_I}) B_{\alpha} = 0$
implies that 
${\mathcal L}(\frac{\partial}{\partial \theta_I})B_{k \alpha} = 0, 
k=0,\ldots,3$. 
Hence, ${\mathcal L}(\frac{\partial}{\partial \theta_I}) B_{0 \alpha} = 0$
and so ${\partial B_{0 \alpha}}/{\partial \theta_I} = 0$. Hence,
$B_{0 \alpha}$ only depends on the coordinates of $W$ and not 
on the $\theta_I$. Thus, $B_{0 \alpha}$ is a well-defined function on 
$U_{\alpha} \subseteq W$.

We also have $B^{IJ}_{0 \alpha} - B^{IJ}_{0 \beta} = m^{IJ}_{\alpha \beta},$
(see Thm.~2.2 of Ref.~\cite{HullTf}),
that is, $m_{\alpha \beta}$ is an obstruction to $B_{0 \alpha}$ being
a global function on $W$.
Note that $m^{IJ}_{\alpha \beta}$ is the cocyle representing the
Mackey obstruction as discussed above.
Define $f^{IJ} = \exp(2 \pi i B^{IJ}_{0 \alpha}), 0 \leq I < J \leq d:$ 
It is clear that this construction is natural since, under pullback of 
gerbes, by definition, $B$ will map naturally.
The map $f$ is smooth since $B^{IJ}_{0\alpha}$ is smooth by definition.
\item Note that from the relation
$B^{IJ}_{0 \alpha} - B^{IJ}_{0 \beta} = m^{IJ}_{\alpha \beta},$
one might think that it would be possible to add a constant 
to each of the $B^{IJ}_{0 \alpha}$ without affecting $H$.

As we now show this procedure is a gerby gauge transformation of $B$. 
Given a gerbe on $X$ the gauge transformations are generated by a global line
bundle $q:K \to X$ with a connection $\nabla$ such that
under a gauge transformation $B_{\alpha} \rightarrow B_{\alpha} + 
\Omega_{\alpha}$ where $\Omega_{\alpha}$ is the restriction of the
curvature form $\Omega$ of $\nabla$ to $X_{\alpha}$. 
Now, $\Omega_{\alpha}$ may be written as 
$$
\Omega_{\alpha} = \frac{1}{2}\,\Omega^{IJ}_{0 \alpha} \Theta_I \wedge \Theta_J
+ \Omega^I_{1 \alpha}  \Theta_I + \Omega_{2 \alpha} 
$$
where $d \Omega_{\alpha} = 0$ since $\Omega$ is the curvature of
a line bundle. This implies that $d\Omega^{IJ}_{0 \alpha} = 0$.
Hence the $\Omega^{IJ}_{0 \alpha}$ are the restriction of a constant 
real-valued function on $X$ to $X_{\alpha}$.
Let the value of this function be $\gamma$.
Then, $f^{IJ} \rightarrow f^{IJ} \exp(i\gamma)$ under a gerby 
gauge transformation.

\item We choose the basepoint\footnote{This is not unusual, for example, 
in Ref.~\cite{herve}, Lemmas 6.5 and 6.6 the Mackey obstruction map is also 
required to be based.} $(1,\ldots,1)$ in $\T^{d(d-1)/2}$ 
and require that that the value of $f$ at $w_0$ should be the identity 
$(1,\ldots,1)$ of the Lie group $\T^{d(d-1)/2}$.  This fixes $f$ uniquely.

For $\mathsf G$ a group, let $\underline{\mathsf G}$ be the sheaf of  $\mathsf G$-valued functions
on $W$. Then we have a short exact sequence of sheaves 
$0 \to \underline{\KZ^m} \to \underline{\KR^m} \to \underline{\T^m} \to 0$.
The first few terms in the associated long exact sequence in sheaf
cohomology of $W$ are
$$
C(W,\KZ^m) \to C(W,\KR^m) \to C(W,\T^m) \overset{\beta}{\to} H^1(W,\KZ^m)
$$
where $\beta$ is the connecting map. In the special case $m=1$, and
$W=S^1$,  $\beta$ is the
map which sends a function to its winding number and hence may be identified
with the map $h$ of \cite{MR2}.
If we set $m = d (d-1)/2$
we see that $f^{IJ}$ defines an element of $C(W,\T^m)$ and its 
image under $\beta$ is precisely $m^{IJ}_{\alpha \beta}$.
In particular $\beta$ is the map which sends a function with
values in $\T^{d(d-1)/2}$ to the set of homotopy classes
$[W \to \T^{d(d-1)/2}] \cong H^1(W,\KZ^m)$.
Thus the homotopy class of $f$ is $m_{\alpha \beta}$ which is
the cohomology class of $H_1 = p_!(H)$.
If $f^{IJ}$ is nullhomotopic,
then $m_{\alpha \beta}$ is a coboundary and by Corollary 2.1 of
Ref.~\cite{HullTf} we see that the T-dual is a principal torus bundle.
The converse is also true by Corollary 2.3 of that reference.
\end{enumerate}
\end{proof}

Thus, we see from the above that the map $f$ has exactly the behaviour 
that the Mackey obstruction map should have (see the proof of Thm.~3.1
of Ref.~\cite{MR2}). In general the Mackey obstruction is only a
continuous map. However, note that the above construction will
always yield a {\em smooth} map. Thus, we suspect that the above construction
is not reversible, there is a loss of information when passing
from string theory to the formalism of Topological T-duality.

If we could naturally associate a continuous-trace algebra $\A$
with spectrum $X$ together with a $\KR^d$-action $\alpha$ to
the data of a smooth equivariant gerbe on $X$, then $f$ would
be the Mackey obstruction of $[\A,\alpha]$.  However, the construction
of such a pair directly from the gerbe is difficult in general.

We now restrict ourselves to the case when $X$ is a trivial
torus bundle since this case is well understood. We use the
work of Echterhoff, Nest and Oyono-Oyono \cite{herve} to compare
the results of Ref.~\cite{Hull} with $C^{\ast}$-algebraic topological
T-duality.  We further require that $H_0 = H_3 = 0$.

When $X = W \times \T^d$ and $f$ is not nullhomotopic, 
the $C^{\ast}$-algebraic formalism of Topological T-duality
(see Refs.~\cite{MR2, herve}) obtains a noncommutative principal
$\T^d$-bundle\footnote{Note that in Ref.~\cite{herve}, $f$ may
be nullhomotopic. Thus, their noncommutative bundles include ordinary
principal torus bundles as a subset. This fact will be implicitly used
below.} as the T-dual. 
We argue below that when $m_{\alpha \beta}$ 
is not a coboundary, the smooth gerbe formalism
of Belov et al.~\cite{HullTf} determines such a bundle:
\begin{theorem}
Suppose $X = W \times \T^d$ and suppose we are given
the data of a smooth $\T^d$-equivariant gerbe on $X$ in the sense of
Ref.~\cite{HullTf} with $H_0= H_3 =0$.
Let $\Cpct$ denote the $C^{\ast}$-algebra of compact operators on an
infinite-dimensional separable Hilbert space. 
\begin{enumerate}
\item The above data determines a $\KZ^d$-action $\theta$ on 
$C_0(W,\Cpct)$, where $\Cpct$ is the set of compact operators on a separable 
Hilbert space.
\item If we require that T-duals in the
$C^{\ast}$-algebraic formalism and the smooth equivariant gerbe
formalism determine the same principal $\T^d$-bundle when $H_1 = 0$,
then this $\KZ^d$-action is determined uniquely.
\item The canonical $\KZ^d$-action on $C_0(W,\Cpct)$ determines
a unique $C^{\ast}$-dynamical system $(\A,\alpha)$ such that $\hat{\A}=X$.
The Dixmier-Douady invariant of $\A$ equals the cohomology class
of $H$. The cohomology class of the Mackey obstruction of $(\A,\alpha)$ is 
equal to the class of the cocycle $m_{\alpha \beta}$ of Ref.~\cite{HullTf}.
\item The canonical $\KZ^d$-action on $C_0(W,\Cpct)$ determines
a unique noncommutative principal torus bundle over $W$.
This bundle is isomorphic to the $C^{\ast}$-algebraic T-dual
$\CrPr{\A}{\KR^d}{\alpha}$.
\end{enumerate}
\label{ThmZAction}
\end{theorem}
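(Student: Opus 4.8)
The plan is to build the $C^{*}$-dynamical system in stages, each stage using the data produced in the proof of Theorem~\ref{ThmMackey} together with the structure theory of \cite{herve}. Recall that on $X = W \times \T^d$ with $H_0 = H_3 = 0$, the proof of Theorem~\ref{ThmMackey} produced, in each patch $U_\alpha$ of a good cover, a $W$-valued function $B_{0\alpha}^{IJ}$ with $B_{0\alpha}^{IJ} - B_{0\beta}^{IJ} = m_{\alpha\beta}^{IJ}$, and a closed form $H_1$ on $W$ representing $p_!(H)$; these are the two pieces of data that must be converted into a $\KZ^d$-action on $C_0(W,\Cpct)$. The local functions $\exp(2\pi i B_{0\alpha}^{IJ})$ glue, because the transition cocycle $m_{\alpha\beta}^{IJ}$ is $\KZ$-valued, into a global map $f \colon W \to \T^{d(d-1)/2}$; this $f$ is precisely the datum that parametrizes (a connected component of) the moduli of noncommutative principal $\T^d$-bundles over $W$ in the sense of \cite{herve}. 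So part~(1) amounts to: interpret $f$ together with the cohomology class $[H_1] \in H^2(W,\KZ^d)$ (the part of $H$ with one leg on the fiber) as a pair classifying a $\KZ^d$-action $\theta$ on $C_0(W,\Cpct)$, via the exact-sequence / crossed-product dictionary of \cite{herve}. Concretely I would write the generators $\theta_1,\dots,\theta_d$ of the $\KZ^d$-action as $\mathrm{Ad}$ of a projective unitary multiplier representation whose obstruction cocycle over $W$ is determined pointwise by $f(w) \in \T^{d(d-1)/2}$, twisted by a line bundle on $W$ with Chern class coming from $H_1^I$; standard Morita/stabilization arguments show such $\theta$ exists on $C_0(W,\Cpct)$.

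For part~(2) I would impose the normalization constraint: when $H_1 = 0$ the two formalisms must produce the same principal bundle. When $H_1=0$ and $f$ is nullhomotopic the T-dual in both pictures is the trivial bundle $W \times \T^d$ with trivial flux, and when $H_1 = 0$ but $f$ nontrivial the gerbe formalism of \cite{HullTf} (its Corollaries 2.1 and 2.3, already invoked in Theorem~\ref{ThmMackey}) produces the nonprincipal $\T^{2d}$-bundle whose monodromy is read off from $m_{\alpha\beta}$. Matching these fixes the normalization of $\theta$ — essentially, it pins down the overall phase ambiguity in $f$ (the $e^{i\gamma}$ of Theorem~\ref{ThmMackey}(2)) and the choice of trivialization, so that no residual freedom remains: any two $\KZ^d$-actions satisfying the constraint differ by an exterior equivalence that is trivial on the nose. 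I expect this uniqueness to reduce to the observation that the relevant moduli space of \cite{herve} injects into $[W,\T^{d(d-1)/2}] \times H^2(W,\KZ^d)$, which is exactly the data we started from.

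Parts~(3) and~(4) are then formal applications of results quoted from \cite{MR1,MR2,herve}. For~(3): a $\KZ^d$-action on $C_0(W,\Cpct)$ is, by Green's imprimitivity theorem / the Packer–Raeburn stabilization trick, Morita equivalent to an $\KR^d$-action on an induced algebra $\A = \mathrm{Ind}_{\KZ^d}^{\KR^d} C_0(W,\Cpct)$ with $\hat\A = X = W\times\T^d$; the $\KR^d$-action $\alpha$ covers the translation $\T^d$-action on the fiber. The Dixmier–Douady class of $\A$ is computed from the twist, and by construction the fiber-wise and base-wise pieces of the twist are $[H_1]$ and (since $H_0=H_3=0$) there is no purely-base or fully-fiber component, giving $\delta(\A) = [H]$. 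That the cohomology class of the Mackey obstruction of $(\A,\alpha)$ equals $[m_{\alpha\beta}]$ is now immediate from Theorem~\ref{ThmMackey}(3) together with the commuting square (Thm.~2.3 of \cite{MR2}) $p_! \circ F = h \circ M$: the left side is $[H_1] = [m_{\alpha\beta}]$. For~(4): apply the $C^{*}$-algebraic T-duality construction of \cite{MR2} — form $\CrPr{\A}{\KR^d}{\alpha}$ — and observe that Echterhoff–Nest–Oyono-Oyono \cite{herve} identify this crossed product, up to isomorphism, with the noncommutative principal $\T^d$-bundle classified by exactly the pair $(f,[H_1])$; the uniqueness from part~(2) makes the identification canonical.

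The main obstacle, I expect, is part~(2) — the uniqueness of the $\KZ^d$-action. Existence in~(1) is soft (any bundle of compacts over $W$ with the right twist admits such an action), and~(3)–(4) are bookkeeping with the machinery of \cite{herve,MR2}; but pinning down that the normalization constraint leaves \emph{no} freedom requires carefully analyzing the exterior-equivalence classes of $\KZ^d$-actions on $C_0(W,\Cpct)$ and checking that the two invariants $f$ and $[H_1]$ form a complete set — i.e., that the forgetful map from exterior-equivalence classes to $[W,\T^{d(d-1)/2}]\times H^2(W,\KZ^d)$ is injective once the basepoint normalization of Theorem~\ref{ThmMackey}(3) is fixed. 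This is where one genuinely uses the classification results of \cite{herve} rather than just their existence statements, and where a careless argument could miss a discrete ambiguity (e.g. from $H^1(W,\T^d)$ acting by exterior equivalences); I would handle it by showing any such ambiguity is killed precisely by the $H_1 = 0$ matching condition.
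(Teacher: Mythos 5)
Your overall strategy coincides with the paper's: both arguments rest on Thm.~2.3 of \cite{herve}, which classifies exterior equivalence classes of $\KZ^d$-actions on $C_0(W,\Cpct)$ by a pair consisting of a Mackey obstruction map $f:W\to\T^{d(d-1)/2}$ and a principal $\T^d$-bundle class in $H^2(W,\KZ^d)$; both then set $\A=\Ind_{\KZ^d}^{\KR^d}(C_0(W,\Cpct),\theta)$ and identify $\CrPr{\A}{\KR^d}{\alpha}$ with $\CrPr{C_0(W,\Cpct)}{\KZ^d}{\theta}$ via Prop.~3.4 of \cite{RaeRos} together with stability. Your closing worry about whether the two invariants are complete is discharged by the citation itself: the result of \cite{herve} is that this pair \emph{parametrizes} the exterior equivalence classes, so no residual $H^1(W,\underline{\T^d})$ ambiguity survives once both data are fixed.

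The genuine problem is your identification of the second classifying datum. You take it to be ``the cohomology class $[H_1]\in H^2(W,\KZ^d)$ (the part of $H$ with one leg on the fiber).'' In the conventions of the paper and of \cite{HullTf}, $H_1^{IJ}$ is the \emph{one-form} coefficient of the two-fiber-leg component $\Theta_I\wedge\Theta_J$; its class lives in $H^1(W,\wedge^2\KZ^d)$, equals $[m_{\alpha\beta}]$, and is the homotopy class of $f$ --- it is the \emph{first} datum, not the second. The one-fiber-leg component is $H_2^I$. Moreover, the paper does not extract the $H^2(W,\KZ^d)$ datum from a de~Rham class at all, but from the \v{C}ech cocycle $g^I_{\alpha\beta}=\exp(2\pi i\,\tilde h^I_{\alpha\beta})$ furnished by the patching relation for the $B_{1\alpha}$ component of the gerbe connection (Thm.~2.2, Eqn.~(2.17b) of \cite{HullTf}); the content of part~(2) is precisely that \emph{this} choice of cocycle is the canonical one, because when $m_{\alpha\beta}$ is trivial $\CrPr{C_0(W,\Cpct)}{\KZ^d}{\theta}$ is continuous-trace over an honest principal $\T^d$-bundle that must coincide with the geometric T-dual of Cor.~2.1 of \cite{HullTf}. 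Your normalization argument instead splits into the cases ``$H_1=0$ with $f$ nullhomotopic'' and ``$H_1=0$ but $f$ nontrivial''; under the paper's conventions the second case is empty (by Thm.~\ref{ThmMackey} the homotopy class of $f$ \emph{is} $[H_1]$), so as written the argument never pins down the bundle datum and part~(2) is not established. The repair is to source the $H^2(W,\KZ^d)$ class from the $\tilde h^I_{\alpha\beta}$ and run the matching condition against Cor.~2.1 of \cite{HullTf} in the case where $m_{\alpha\beta}$ is a trivial cocycle.
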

\begin{proof}
Since we have a trivial bundle the gerbe curvature form may be written as 
$$
H =  \frac12 H_1^{IJ} d \theta_I \wedge d \theta_J
+ H_2^I d \theta_I \,,
$$
where $H_1$ is a closed, integral form on $W$ and $\theta_I,I=1,\ldots,d$,
are coordinates on $\T^d$.
\begin{enumerate}
\item This is a consequence of Thm. ~2.3 of Ref.~\cite{herve}: The
exterior equivalence classes of $\KZ^d$-actions on $C_0(W,\Cpct)$ 
are parametrized by a group ${\mathcal E}_{\KZ^d}(W)$.  An element $\beta$ 
of this group is determined by two data:
\begin{itemize}
\item A Mackey obstruction map $f:W \to \T^{d(d-1)/2}$.
\item An isomorphism class of a principal $\T^d$ bundle $q:Y \to W$,  i.e.\ 
an element of $H^2(W,\KZ^d)$.
\end{itemize}
By Thm.~\ref{ThmMackey} above, we already have the map $f$. The 
principal bundle $q$ may also be obtained from the equivariant gerbe data: 
By Thm. ~2.2 of Ref.~\cite{HullTf}, the component $B_{1 \alpha}$ 
(see Eqn.~\eqref{EqBEquiv} above) of the gerbe connection $B_{\alpha}$ satisfies 
Eqn.~(2.17b) of Ref.~\cite{HullTf}
$$
B_{1\alpha} - B_{1\beta} = d \tilde{h}^I_{\alpha \beta}
+ m^{IJ}_{\alpha \beta} A_{\beta J}.
$$
where $\tilde{h}^I_{\alpha \beta}$ are real-valued functions on $W$
satisfying the cocycle identity\footnote{Note that since $X$ is a 
trivial bundle, its transition functions 
$\lambda_{\alpha \beta J}, J=1,\ldots,d$, may be taken to be zero in Thm. ~2.2
of Ref.~\cite{HullTf}.}
$
\tilde{h}^I_{\alpha \beta} + \tilde{h}^I_{\beta \gamma}
+ \tilde{h}^I_{\gamma \alpha} = 0.
$
Let $g^I_{\alpha \beta} = exp(2 \pi i \tilde{h}^I_{\alpha \beta})$.
The $g^I_{\alpha \beta}$ satisfy the cocycle condition 
$$
g^I_{\alpha \beta} g^I_{\beta \gamma} g^I_{\gamma \alpha} = 1
$$
on threefold intersections $X_{\alpha \beta \gamma}$ and hence 
define an element\footnote{Here $\underline{\T^d}$ 
is the sheaf of $\T^d$-valued functions on $W$ and the cohomology group is 
the sheaf cohomology.} of $H^1(W,\underline{\T^d})$. 
Hence they determine an isomorphism class of a principal $\T^d$-bundle 
over $W$.
Now the pair$[f^{IJ}, {g^I}_{\alpha \beta}]$ determine a 
$\KZ^{d}$-action $\theta$ on $C_0(W,\Cpct)$.

\item We need to show that the choice of $\tilde{h}_{\alpha \beta}$ is
canonical. After all, it is conceivable that one could
construct other cocycles on $W$ which might give a completely
different $\KZ^d$-action on $C_0(W,\Cpct)$.
However, the choice of $\tilde{h}_{\alpha \beta}$ is forced on us 
by the following argument:
Belov et al.\ show that\footnote{Corollary 2.1 of Ref.~\cite{HullTf}.}
when $m_{\alpha \beta}$ is a trivial cocycle, the $\tilde{h}_{\alpha \beta}$
determine the isomorphism class of the T-dual principal torus bundle. 
We therefore require that if $f$ constructed in Thm. ~\ref{ThmMackey} 
is nullhomotopic, the T-dual noncommutative principal $\T^d$-bundle obtained
above should reduce to a principal
$\T^d$-bundle (possibly with $H$-flux) whose characteristic class
should agree with that of the T-dual obtained from 
Belov et al.~Ref.~\cite{HullTf}.

It is a well-known result that when $f=0,$ i.e. $\beta$ has no
Mackey obstructions, the crossed product 
$\CrPr{C_0(W,\Cpct)}{\KZ^d}{\beta}$ 
is actually a continuous-trace
algebra on an ordinary principal $\T^d$-bundle.
In the $C^{\ast}$-algebraic
formalism of Topological T-duality, the algebra
$\CrPr{C_0(W,\Cpct)}{\KZ^d}{\beta}$  is strongly Morita equivalent
to the $C^{\ast}$-algebraic\footnote{ This is by Prop.~3.4 of Raeburn and 
Rosenberg \cite{RaeRos} with $G=\KR^d$ and $H=\KZ^d$.} 
T-dual $\CrPr{\A}{\KR^d}{\alpha}$ 
Thus, the choice of the cocycle $h$ above is canonical, as it is the
only one which makes $Y$ agree with the geometric T-dual 
in Ref.~\cite{HullTf}.
\item Let $\A = \Ind^{\KR^d}_{\KZ^d}(C_0(W,\Cpct),\theta)$
and $\alpha$ the induced $\KR^d$-action on $\A$.
This determines a $C^{\ast}$-dynamical system $(\A,\alpha)$ with
spectrum $X$ unique up to exterior equivalence. 
The Dixmier-Douady class of $\A$ is equal to the class of 
$H$ in integral cohomology by construction.
By the proof of Thm. ~\ref{ThmMackey} above, the homotopy class of
$f$ is equal to the cohomology class $m_{\alpha \beta}$ calculated from
the data of a $\T^d$-equivariant gerbe on $X$.
\item The canonical $\KZ^d$-action on $C_0(W,\Cpct)$ determines
a noncommutative principal $\T^d$-bundle over $W$ namely,
$\CrPr{C_0(W,\Cpct)}{\KZ^d}{\theta}$.  
This is strongly Morita equivalent to the $C^{\ast}$-algebraic T-dual
$\CrPr{\A}{\KR^d}{\alpha}$ by Prop. ~3.4 of Ref.~\cite{RaeRos}.
Since both algebras are stable, they are actually isomorphic.
\end{enumerate}
\end{proof}

Thus, we have constructed the $C^{\ast}$-algebraic T-dual\footnote{
The space that we describe as `the T-dual'
above is the one obtained by applying the T-duality transformation
along the all $d$ of the $S^1$-orbits
of $X$. It is one of the possible set of T-duals of the space $X$. 
Which T-dual we obtain is encoded in the choice of the action $\alpha$ above.
} 
from the string theoretic data on $X$.   It is thus natural to
view the noncommutative T-duals of Mathai and Rosenberg \cite{MR2}
as being topological approximations to the T-folds of Hull \cite{Hull}.

When the T-dual is geometric we may identify the correspondence space with 
the spectrum of $\Tf \cong \CrPr{\A}{\KZ^d}{\alpha|_{\KZ^d}}$. 
In light of the above theorem, it is natural to preserve this identification 
when the T-dual is non-geometric. 

Thus, we claim that $\Tf$ is the $C^{\ast}$-algebra naturally associated to the 
T-folds of Hull. It would be interesting
to construct this algebra directly from the data of the correspondence space
obtained by Belov et al.~\cite{HullTf} when the T-dual is nongeometric.

% ===================================================================
%
\section{D-Branes on the Correspondence Space \label{SecDBCS}}

In this section we make a few remarks on D-branes on T-folds and
their connection with the formalism of Topological T-duality.
We follow Ref.~\cite{DBTF} for the theory of D-branes on Hull's T-folds.

We would like to determine the charge group of the D-branes in this
background. To do this we consider the simplest case first, namely
the case when there are no Mackey obstructions so both spaces are
geometric. In this case, we have a commutative `diamond'
\begin{gather}
\xymatrix{ 
& X\times_W X^{\#} \ar[dr]_{r} \ar[dl]_{s} & \\
X \ar[dr]_{p} & & X^{\#} \ar[dl]_{q} \\
& W & \\ 
}
\label{CDBrane}
\end{gather}
[Note that the projection operators $\Pi$ and $1 - \Pi$
discussed by Hull \cite{HullTf} are obtained from the maps $r$ and $s$
restricted to each local coordinate patch.]
{}From Ref.~\cite{HullTf} Sect.~7, we see that in this situation, a
D-brane on the correspondence space should give D-branes on $X$ and $X^{\#}$.
Now due to the maps $r,s$ an element of the twisted $K$-homology of
the correspondence space
$K^{\tilde{H}}_{\ast}(X \times_W X^{\#})$, will
give rise to elements of $K^H_{\ast}(X)$ and $K^{H^{\#}}_{\ast}(X^{\#})$ which
are D-branes on $X$ and $X^{\#}$ (see for example Ref.~\cite{RS}).
Hence D-branes on the T-fold should be given by the elements of
$K^H_{\ast}(X \times_W X^{\#})$. Thus their charges should be
given by the twisted $K$-theory of the correspondence space
$K^{\ast}_H(X \times_W X^{\#})$ which, in this case,
is isomorphic to the operator-algebraic K-theory of $\Tf$. 
We conjecture that even when Mackey obstructions are present and
the spectrum of $\Tf$ is not a fibered product of two spaces, this 
identification continues to hold. Thus, we conjecture that the charge group of
D-branes on a T-fold should always be the K-theory of $\Tf$.
We now determine the $K$-theory of $\Tf:$

\begin{theorem}
Let $\B$ be a stable $C^{\ast}$-algebra with spectrum $W$. Let $\theta$
be a $\KZ^d$-action on $\B$. Consider the (stable) induced algebra
$\A = \Ind_{\KZ^d}^{\KR^d}(\B,\theta)$. 
Let the $\KR^d$-action on $\A$ be denoted $\alpha$
and define $\Tf = \CrPr{\A}{\KZ^d}{\alpha}$. 
\begin{enumerate} 
\item We have that
$$
K_0(\Tf) \cong K_1(\Tf) \cong K_0(\A)^d \oplus K_1(\A)^d\,.
$$
\item There is a natural automorphism $\phi_i:K_i(\Tf) \to K_{i}(\Tf), i=0,1,$
induced by the Connes-Thom isomorphism 
\end{enumerate}
\label{ThmIndKT}
\end{theorem}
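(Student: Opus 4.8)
The key point is that the conclusion depends only on the $C^{\ast}$-dynamical system $(\A,\alpha,\KR^d)$: the induced-algebra presentation of $\A$ plays no role beyond supplying the $\KR^d$-action $\alpha$, and the proof reduces to three standard tools --- Green's imprimitivity theorem, the Connes--Thom isomorphism, and Takai duality. Since $\A$ and $\Tf$ may be stabilised I treat strong Morita equivalences as isomorphisms. The first step is to rewrite $\Tf=\CrPr{\A}{\KZ^d}{\alpha|_{\KZ^d}}$ by Green's imprimitivity theorem (see, e.g., Prop.~3.4 of \cite{RaeRos}, with $G=\KR^d$ and $H=\KZ^d$): $\Tf$ is strongly Morita equivalent to $M:=(\A\otimes C(\T^d))\rtimes_{\alpha\otimes\mathrm{lt}}\KR^d$, where $\mathrm{lt}$ is the translation action of $\KR^d$ on $C(\T^d)=C(\KR^d/\KZ^d)$ and $\KR^d$ acts diagonally. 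Hence $K_i(\Tf)\cong K_i(M)$.

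The second step applies the Connes--Thom isomorphism $d$ times to $M$, giving $K_i(\Tf)\cong K_{i+d}(\A\otimes C(\T^d))$. Because $C(\T^d)\simeq C(S^1)^{\otimes d}$ has free $K$-theory, the Künneth theorem gives (with no Tor term) a split isomorphism $K_{\ast}(\A\otimes C(\T^d))\cong K_{\ast}(\A)\otimes K_{\ast}(C(\T^d))$ of $\KZ/2$-graded groups, and since $K_{\ast}(C(\T^d))\cong\Lambda^{\ast}\KZ^d$ this produces the direct-sum decomposition of part (1) --- the cases $i=0,1$ agreeing because $\Lambda^{\ast}\KZ^d$ has isomorphic even and odd parts. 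Equivalently one may iterate the Pimsner--Voiculescu exact sequence one circle at a time: writing $\KR^d=\KR^{d-1}\times\KR$ and $\KZ^d=\KZ^{d-1}\times\KZ$, the final $\KZ$-action on $\CrPr{\A}{\KZ^{d-1}}{\alpha|_{\KZ^{d-1}}}$ is the restriction of an $\KR$-action (the $d$-th $\KR$-coordinate commutes with the first $d-1$), so its generating automorphism is homotopic to the identity, the Pimsner--Voiculescu connecting maps vanish, and each stage replaces $K_{\ast}$ by $K_{\ast}\oplus K_{\ast+1}$.

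For part (2) I would use that $M$ carries the dual $\widehat{\KR^d}\cong\KR^d$-action. This furnishes a second identification of $K_i(\Tf)$ with $K_{i+d}(\A\otimes C(\T^d))$, namely the Connes--Thom isomorphism for the dual action followed by the Takai duality isomorphism $M\rtimes\widehat{\KR^d}\cong(\A\otimes C(\T^d))\otimes\Cpct$; composing this with the inverse of the identification from the second step yields an endomorphism $\phi_i$ of $K_i(\Tf)$. It preserves the grading because the total degree shift is $2d\equiv 0$; it is an automorphism because its two constituents are isomorphisms; and it is canonical because the Connes--Thom isomorphism and Takai duality are natural.

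All the analytic ingredients (Green, Connes--Thom, Takai, Pimsner--Voiculescu, Künneth) are off the shelf, so I expect the genuine work to be bookkeeping: checking that the choices made en route --- the ordering of the $d$ circles, the particular Green bimodule, the basepoint on $\T^d$ used to build $C(\T^d)$ --- do not affect the decomposition in part (1) and, more delicately, do not affect $\phi_i$ in part (2); and keeping track of the degree shifts carefully enough that $\phi_i$ is literally an automorphism of $K_i(\Tf)$. That naturality/well-definedness issue for (2) is the step I expect to need the most care.
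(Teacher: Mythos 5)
Your proposal follows essentially the same route as the paper: Raeburn--Rosenberg/Green imprimitivity to replace $\Tf$ up to Morita equivalence by $(\A\otimes C(\T^d))\rtimes\KR^d$ (the paper phrases this as the induced algebra $\Ind_{\KZ^d}^{\KR^d}(\A,\alpha|_{\KZ^d})$ ``collapsing'' to $\A\otimes C(\T^d)$), then Connes--Thom and K\"unneth for part (1), and for part (2) a second identification coming from the dual side --- the paper repeats the construction with $\CrPr{\A}{\KR^d}{\alpha}$ in place of $\A$, which is the same mechanism as your Takai-duality argument. One caveat: your K\"unneth step, done honestly, gives $2^{d-1}$ copies of $K_0(\A)\oplus K_1(\A)$ (since the even part of $\Lambda^{\ast}\KZ^d$ has rank $2^{d-1}$), not the $d$ copies asserted in the statement, so the two agree only for $d\le 2$; this discrepancy is already present in the paper's own proof, which quotes the exponent $d$ without justification.
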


\begin{proof} $\quad$ 
\begin{enumerate}
\item We consider $\A$ as possessing a natural $\KZ^d$ action 
(denoted $\alpha$ as well) obtained by restricting the $\KR^d$ action 
to the $\KZ^d$ subgroup. Consider $\D= \Ind_{\KZ^d}^{\KR^d}(\A)$. 
Let the $\KR^d$ action on $\D$ be denoted $\phi$.
We know that $\CrPr{\D}{\KR^d}{\phi}$ is strongly Morita equivalent to
$\CrPr{\A}{\KZ^d}{\alpha}$ by Thm.~2.2 of Ref.~\cite{RaeRos}.
Hence, by the Connes-Thom isomorphism theorem, it is enough to calculate
$K_{\ast}(\D)$. Now $\D$ was formed by applying the induced algebra
construction to the restriction of the $\KR^d$ action on 
$\Ind_{\KZ^d}^{\KR^d}(\B, \theta)$ to $\KZ^d \subseteq \KR^d$.
Thus, the induced algebra construction collapses and
$\D \cong \A \otimes C(\T^d)$. Hence, by the K\"unneth theorem 
$K_{\ast}(\D) \cong K_0(\A)^d \oplus K_1(\A)^d$.

\item The existence of the natural automorphism $\phi_i$ follows since
we may repeat the above construction with $\CrPr{\A}{\KR^d}{\alpha}$
instead of $\A$, note that there is an isomorphism
$\CrPr{\A}{\KZ^d}{\alpha} \cong 
\CrPr{ { (\CrPr{\A}{\KR^d}{\alpha})} }{\KZ^d}{\hat{\alpha}}$
and apply the Thom isomorphism theorem.
\end{enumerate}
\end{proof}

Note that the above proof its true for any induced algebra
$\Ind^{\KR^d}_{\KZ^d}(C_0(W,\Cpct))$. Such algebras
include those with noncommutative geometries as T-duals. Thus,
by our conjecture above, the charge group of D-branes on $\Tf$ will
always be $K_0(\A)^d \oplus K_1(\A)^d$.
It would be interesting to interpret this charge group physically
but a more detailed\footnote{It is not clear, for example, 
whether D-branes on T-folds would carry Chan-Paton bundles 
on their worldvolumes.} 
study of D-branes on T-fold backgrounds is needed. Still, we feel
that the above argument is natural enough to serve a preliminary step
towards such studies.

In the above theorem we could have begun with $\CrPr{\A}{\KR^d}{\alpha}$
instead of $\A$ and then the $K$-groups would be expressed in terms
of the $K$-theory of the crossed product. It would be interesting to
find an expression for the $K$-theory of $\Tf$ which does not
depend on the arbitrary choice of either $\A$ or $\CrPr{\A}{\KR^d}{\alpha}$.

Note that in the above theorem we could have replaced $\A$ with
a smooth subalgebra $\A^{\infty} \subseteq \A$ and still obtained
the same K-theory (by the Karoubi Density Theorem). 
In particular we could have taken the 
smooth subalgebra of smooth sections of the $\Cpct$-bundle over
$\hat{\A}$ which corresponds to $\A$. 
It might be interesting to try to define a smooth structure on $\A$ 
(in the sense of Connes) using the smooth structure of the gerbe on $X$.
However, as currently defined, $\A$ is only sensitive to topological
but not smooth information.

We recall that T-folds are obtained as geometrizations of parametrized families
of string theories on $\T^d$. Each $\T^d$ fiber is the background for
an exactly solvable string theory which could have D-brane excitations. 
Now consider families of these D-branes one in each fiber parametrized
by cycles in the base $W$. These would be obvious candidates for 
D-branes on T-folds. 
Such D-branes would wrap lifts of cycles on the base $W$ to the
correspondence space and also wrap the torus fiber. 
We note that in Ref.~\cite{DBTF} such
configurations have been studied from the string-theoretic point of
view in backgrounds which are geometrizations
of T-folds, i.e. in the correspondence space $\C$ of Ref.~\cite{HullTf}. 
We would like to determine the analogue of these types of D-brane
configurations in the formalism of Topological T-Duality. 

It is well known that the charge group of of D-branes on a space $X$ 
is the topological K-theory of $X$.
The D-brane configuration described above
would correspond to sections of a bundle of $K$-theory groups
over $W$ in the sense of Ref.~\cite{herve}. 

The authors of Ref.~\cite{herve} 
study $C^{\ast}$-bundles:  Roughly speaking, these are $C^{\ast}$-algebras 
whose elements are sections of bundles\footnote{These bundles do not need to be 
locally trivial.} of $C^{\ast}$-algebras over a base space $W$. 
For examples, the algebras $\A$, $\Tf$ and 
$\CrPr{\A}{\KR^d}{\alpha}$, discussed above, 
are all $C^{\ast}$-bundles over $W$.

To a $C^{\ast}$-bundle $\A(W)$ 
over a topological space $W$ satisfying certain conditions
(see Ref.~\cite{herve2} for details),
Ref.~\cite{herve} associates a bundle of abelian groups: 
The fiber of this bundle
over a point $x \in W$ is
the K-theory of the fiber of $\A(W)$  at that point. 
They also associate to this bundle a monodromy map: If $\A_w$ is
the fiber of $\A(W)$ over $w \in W,$ they define a map 
$\pi_1(W) \to \Aut(K_{\ast}(A_w))$. 
In addition the authors of Ref.~\cite{herve} show 
that the $K$-theory group of $\A(W)$ (for example, the charges of D-branes
on $\Tf$) may be obtained from
a spectral sequence whose $E^2$-term is a certain family of groups 
calculated using sections of the $K$-theory bundle of $\A(W)$.
This connection between parametrized families of D-branes and the
$K$-theory of $\Tf$ (which is the charge group of D-branes)
is interesting. 

Do such $K$-theory bundles exist for the $C^{\ast}$-algebra
$\Tf$ of the correspondence space? 
If so, continuous sections of the $K$-theory bundle of $\Tf$ would 
correspond to parametrized families of D-branes on the T-fold 
associated to $\Tf$. 
The following result holds for induced $C^{\ast}$-algebras $\A:$
\begin{theorem}
Let $\B = C_0(W,\Cpct)$ and let $\theta$
be a $\KZ^d$-action on $\B$. Consider the induced algebra
$\A = \Ind_{\KZ^d}^{\KR^d}(\B,\theta)$. 
Let the $\KR^d$-action on $\A$ be denoted $\alpha$
and define $\Tf = \CrPr{\A}{\KZ^d}{\alpha}$. 
\begin{enumerate}
\item $\Tf$ is a $C^{\ast}$ fibration over $W$ with 
fibers $\F \cong \CrPr{C(\T^{d},\Cpct)}{\KZ^d}{\alpha|_{\KZ^d}}$. 
\item It is a $K$ fibration in the sense of
Def.~4.1 of Ref.~\cite{herve} and hence possesses
a $K$-theory bundle and a monodromy map
$\rho: \pi_1(W) \to \Aut(K_{\ast}(\F))$.
\item When there is a geometric T-dual, the image of the monodromy map
lies in $\mathsf{GL}(2d,\KZ).$
\end{enumerate}
\label{ThmKFib}
\end{theorem}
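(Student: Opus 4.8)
The plan is to verify the three claims in order, reducing each to machinery already available: the induced–algebra/crossed–product identities of Raeburn–Rosenberg \cite{RaeRos}, the computation in Theorem~\ref{ThmIndKT}, and the definition of a $K$-fibration from \cite{herve}. For part (1), I would first note that $\B = C_0(W,\Cpct)$ is by construction a $\Cpct$-bundle over $W$ (a continuous-trace algebra with spectrum $W$ and vanishing Dixmier–Douady class), and that both the induced-algebra construction $\Ind_{\KZ^d}^{\KR^d}(-,\theta)$ and the crossed product by $\KZ^d$ are functorial over $W$ in the sense that they act fiberwise on the $C_0(W)$-algebra structure. Concretely, evaluation at $w\in W$ sends $\B$ to $\Cpct$, sends $\A$ to $\Ind_{\KZ^d}^{\KR^d}(\Cpct,\theta_w)\cong C(\T^d,\Cpct)$ (the induced algebra over a point is just functions on $\KR^d/\KZ^d$ valued in the algebra), and hence sends $\Tf=\CrPr{\A}{\KZ^d}{\alpha}$ to $\CrPr{C(\T^d,\Cpct)}{\KZ^d}{\alpha|_{\KZ^d}}$. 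So $\Tf$ is a $C^{\ast}$-bundle over $W$ with the stated fiber $\F$; I would make this precise by exhibiting the $C_0(W)$-algebra structure on $\Tf$ coming from the center of $M(\B)$ and checking it is preserved under the two constructions, citing the relevant continuity results in \cite{herve, herve2}.

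For part (2), the content is to check the axioms of Def.~4.1 of \cite{herve} for a $K$-fibration: local triviality (or the weaker local condition \cite{herve} actually requires) of the $K$-theory sheaf, and the existence of the associated bundle of abelian groups with its monodromy representation $\rho:\pi_1(W)\to\Aut(K_{\ast}(\F))$. Here I would lean on Theorem~\ref{ThmIndKT}, which computes $K_{\ast}$ of exactly this type of crossed product (taking $\B=C_0(U,\Cpct)$ for $U\subseteq W$ a contractible coordinate patch, where the $\KZ^d$-action is, up to exterior equivalence, of Mackey type and the induced algebra is the one studied in \cite{herve}, Thm.~2.3). Over a good cover $\{U_\alpha\}$ of $W$ the restriction $\Tf|_{U_\alpha}$ has constant $K$-theory $K_0(\F)\oplus K_1(\F)$, so the $K$-groups form a locally constant sheaf; the clutching data on overlaps $U_{\alpha\beta}$ give the transition automorphisms, and their composite around a loop defines $\rho$. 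I would then simply invoke \cite{herve} Def.~4.1 and the surrounding propositions to conclude $\Tf$ is a $K$-fibration with monodromy $\rho$.

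For part (3), assume there is a geometric T-dual, so by Theorem~\ref{ThmMackey}(3) the map $f$ is nullhomotopic and there are no Mackey obstructions; then by the discussion preceding Theorem~\ref{ThmZAction} the spectrum of $\Tf$ is the correspondence space $X\times_W X^{\#}$, a principal $\T^{2d}$-bundle over $W$. In this situation $\F\cong C(\T^{2d},\Cpct)$ (the fiber of the correspondence space is $\T^{2d}$, stabilized), so $K_{\ast}(\F)\cong K^{\ast}(\T^{2d})\cong \Lambda^{\ast}\KZ^{2d}$, and the monodromy of a principal $\T^{2d}$-bundle acts through the structure group, which for the correspondence space is $\mathsf{GL}(2d,\KZ)$ acting on $\T^{2d}=\KR^{2d}/\KZ^{2d}$; this action on $K^{\ast}(\T^{2d})$ factors through $\mathsf{GL}(2d,\KZ)$ (in fact through $\Lambda^{\ast}$ of the defining representation). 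Thus $\im\rho\subseteq\mathsf{GL}(2d,\KZ)$. The main obstacle, and the step I would spend the most care on, is part (1)/(2): namely establishing that the induced-algebra and crossed-product constructions genuinely commute with evaluation at points of $W$ with the right continuity — i.e.\ that $\Tf$ really is a (possibly non-locally-trivial) $C^{\ast}$-bundle over $W$ in the precise technical sense of \cite{herve2} and that the hypotheses of Def.~4.1 of \cite{herve} are met rather than merely plausible; once that is nailed down, parts (2) and (3) are essentially bookkeeping on top of Theorem~\ref{ThmIndKT} and the structure-group identification.
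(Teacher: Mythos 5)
Your overall strategy matches the paper's for parts (1) and (3), but part (2) takes a genuinely different route. For part (1) the paper, like you, reduces to the fact that the crossed product of a $C_0(W)$-algebra by a spectrum-fixing action is again a $C_0(W)$-algebra with fibers the crossed products of the fibers (it phrases this via $\A^{\#}=\CrPr{\A}{\KR^d}{\alpha}\cong\CrPr{C_0(W,\Cpct)}{\KZ^d}{\theta}$, but the content is the same as your fiberwise-evaluation argument). For part (2), however, the paper does not verify Def.~4.1 of \cite{herve} by hand: it cites Lemma~8.4 of \cite{herve} to get that the induced algebra $\A$ is a $K$-fibration, and then Remark~2.3, Part~(1) of \cite{herve2} to conclude that the crossed product by the spectrum-fixing $\KZ^d$-action is again a $K$-fibration; the bundle and monodromy then come from Prop.~4.2 of \cite{herve}. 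Your proposed alternative --- showing the $K$-theory sheaf is locally constant over a good cover using Theorem~\ref{ThmIndKT} patchwise --- has a real gap to fill: being a $K$-fibration in the sense of \cite{herve} requires that the \emph{evaluation maps} along paths induce isomorphisms on $K$-theory, not merely that the $K$-groups of the restrictions $\Tf|_{U_\alpha}$ be abstractly constant, and Theorem~\ref{ThmIndKT} only produces the groups (via Connes--Thom and the K\"unneth theorem), not the maps. You would need to trivialize the action up to exterior equivalence over each contractible patch and check that evaluation induces the isomorphism there, which is essentially re-proving the cited lemmas. On part (3) your argument is actually more careful than the paper's: the paper asserts $K_i(C(\T^{2d},\Cpct))\cong\KZ^{2d}$, which is only correct for $d=1$ (in general it is $\KZ^{2^{2d-1}}$), whereas your identification $K_{\ast}(\F)\cong\Lambda^{\ast}\KZ^{2d}$ with the monodromy factoring through the structure group $\mathsf{GL}(2d,\KZ)$ acting by $\Lambda^{\ast}$ of the defining representation is the right way to make the claim precise.
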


\begin{proof} $\quad$
\begin{enumerate}
\item It is clear that $\Tf$ is a $C_0(W)$-algebra: Recall
the crossed product of a $C_0(W)$-algebra by a spectrum fixing
action of a group is also a $C_0(W)$-algebra. 
Now, $\Tf \cong \CrPr{\A^{\#}}{\KZ^d}{\alpha^{\#}|_{\KZ^d}}$ but
$\A^{\#}$ which is, by definition, $\CrPr{A}{\KR^d}{\alpha}$ is isomorphic to
$\CrPr{C_0(W,\Cpct)}{\KZ^d}{\theta}$ (since $\Tf$ is stable).
Thus, $\Tf$ is a $C^{\ast}$ fibration over $W$. Now $\Tf$ is 
by definition $\CrPr{\A}{\KZ^d}{\alpha}$. $\A$ is a $C^{\ast}$-bundle
over $W$ with fibers $C(\T^d,\Cpct)$. Since $\alpha|_{\KZ^d}$
is spectrum fixing, the fibers of the fibration $\Tf$ are
$\F \cong \CrPr{C(\T^d,\Cpct)}{\KZ^d}{\alpha|_{\KZ^d}}$.

\item By Lemma 8.4 of Ref.~\cite{herve}, $\A$ is a $K$ fibration.
Since $\Tf$ is the crossed product of $\A$ by a spectrum-fixing
$\KZ^d$-action, by Remark 2.3 Part (1) of Ref.~\cite{herve2}, 
$\Tf$ is a $K$ fibration as well.
The existence of a $K$-theory bundle and a monodromy map follow from
Prop. ~4.2 of Ref.~\cite{herve}. 

\item In the geometric case (i.e. when the Mackey obstruction vanishes),
the fibers of $\Tf$ over the base are just $C(\T^{2d},\Cpct)$. 
Since $K_i(C(\T^{2d},\Cpct)) \cong \KZ^{2d}, i = 0,1$, the image
of the monodromy map is $\Aut(\KZ^{2d}) \cong \mathsf{GL}(2d,\KZ)$.
\end{enumerate}
\end{proof}

Since the $K$-theory groups of
each fiber are discrete groups, if the $K$-theory bundle is trivial, the 
sections would be constant. This would correspond to choosing a
D-brane configuration on $W$ which restricts
to the {\em same} D-brane on each fiber. 
In general, if the bundle possesses a monodromy, there could be
nonconstant sections.
For example we could pick a D0-brane in every fiber and 
a T-dual D1-brane on applying the monodromy. 
Since the brane worldvolume jumps discontinuously, 
such D-brane configurations cannot exist when the background is geometric.
However, they might exist in non-geometric backgrounds.
In fact, these configurations have been observed in several physical examples
in Ref.~\cite{DBTF} where they are termed `generalized' D-branes.
In Ref.~\cite{DBTF}, parametrized families of D-branes for which
this does not occur are termed `geometric' D-branes.
The authors also consider generalized D-branes which `return to themselves'
only after (finitely) many traversals of such loops.

What would be the analogue of these in Topological T-duality? We suggest
the following construction:
As shown in Prop.~4.2 of Ref.~\cite{herve}
there is an action of $\pi_1(W)$
on the $K$-theory bundle of the algebra $\Tf$.  We can
define geometric families of D-branes in the $C^{\ast}$-algebraic formalism 
as constant sections of this bundle.
Non-geometric families would then correspond to nonconstant sections.
Recall that the worldvolume of a generalized D-brane
`returns to itself' after a finite number of circuits of the
cycle it wraps. This implies that the monodromy of the associated section 
of the $K$-theory bundle should be cyclic.
That is, in the formalism of Topological T-duality
generalized D-branes\footnote{ In the sense of Ref.~\cite{DBTF} }
would exist on $\Tf$ if at some point $w$ in $W$, the image of the monodromy 
map $\rho:\pi_1(W) \to \Aut(K_{\ast}(\A_w))$
was a finite cyclic subgroup of $\Aut(K_{\ast}(\A_w))$.
Obviously, when the monodromy map is trivial, all D-brane
configurations are geometric.

% ===================================================================

\section{Example: $\T^3$ with non T-dualizable $H$-flux \label{SecEg}}

In this section we study in detail the T-fold associated to a spacetime
which is a trivial $\T^2$-bundle over $\T$.
First, we make some remarks on the T-fold formalism applied to 
spacetimes which are trivial torus bundles over $W$.

Consider a space of the form $\T^d \times W$.  Let 
$y_i,i =1,\ldots,n$, be local coordinates on
$W$ and $\theta_k,k=1,\ldots,d$, be coordinates on the $\T^d$ fiber.
Assume that the metric on the space is the product metric of the
flat metric on $\T^d$ with the metric on $W$. The $B$ field
may then be written $B_{0\alpha}^{IJ} d\theta_I \wedge d \theta_j + \ldots$.
Here the dots denote terms in the $B$ field containing one or zero 
$d\theta_I$'s. It is clear that only $B_0$ will enter the expression for
$\HTf$ because only the $B_0$ component of the $B$ field is supported
completely on the torus fiber. The other components have nontrivial support
along the base $W$. Thus, the restriction of the $B$ field to the 
$\T^d$ fiber is locally $B_{0 \alpha}^{IJ}$.
If we examine the form for $\HTf$ we
see that, in the T-fold metric, the $B_{0 \alpha}^{IJ}$ appear along 
antidiagonal blocks. As noted in Thm. ~2.2 of Ref.~\cite{HullTf}, we have
$$
B_{0 \alpha}^{IJ} = B_{0 \beta}^{IJ} - m_{\alpha \beta}^{IJ}\,.
$$
Hence, a nonzero class $m_{\alpha \beta}$ 
causes the metric $\HTf$ to have nontrivial off-diagonal components. 
We claim that it is the twisting of the T-fold by these 
components of $\HTf$ that creates the monodromy in the total space of
the T-fold (as noted in Ref.~\cite{HullTf}).

Pick a open cover of $\T$ consisting of two open sets, with $U_1$ being
the complement of $0$ and $U_2$ the complement of its antipodal point.
Let $x$ be the coordinate
on the base $\T$ and $y,z$ be the coordinates on the 
$\T^2$ fiber. We pick the flat metric on $\T^3$ and an 
$H$-flux $H=N\, dx \wedge dy \wedge dz$.
We pick the $B$ field $B = Nx\, dy \wedge dz$ corresponding to this $H$-flux.

We recall that the connection forms on $\T^3$ are the $1$-forms
$dy,dz$. Therefore, comparing with Eqn.~\eqref{EqBEquiv} above, we see
that $B_{0\alpha} = Nx$. 
It is then clear that on changing charts, we have
that $B_{01} = B_{02} - N$. 
Now $m_{\alpha \beta}$ is an integral $2$-cocycle on $W$ and as such, 
is determined by specifying
integers for each pairwise intersection $U_{\alpha \beta}$. With the
given choice of charts, the cocycle $m_{\alpha \beta}$ has value 
$N$ on $U_1 \cap U_2$. Note that changing the coordinate functions on 
$U_{\alpha}$ will simply give a cohomologous cocycle.
In addition, using Thm. (\ref{ThmMackey}) in Sect. ~\ref{SecCorr}
above we see that the Mackey obstruction map is exactly the map
$x \to \exp(2 \pi i Nx)$. This is the Mackey obstruction map
constructed for an induced algebra with spectrum $\T^3$ by Mathai
and Rosenberg in Prop. ~4.1 of Ref.~\cite{MR2}. Note that the construction of
Ref.~\cite{MR2} exists for any continuous Mackey obstruction function. However,
we will always obtain a smooth function as a Mackey obstruction. The
reason, as mentioned in the introduction to Sect.~2,
is that Topological T-duality is a topological approximation 
(a continuous-trace algebra with a $\KR^d$ action) of a smooth 
$\T^d$ equivariant gerbe with connection.

Following Ref.~\cite{DBTF}, Sect.~2.3, let 
$$
\left(\begin{array}{c}
X^i \\ \tilde{X}^i \end{array}\right) 
= \left(\begin{array}{c} y \\ z \\ \tilde{y} \\ \tilde{z} \end{array} \right)
$$
be the coordinates on the corresponding T-fold.
The metric $\HTf$ on the T-fold 
is\footnote{See Eqn.~(2.20) of Ref.~\cite{DBTF}.}
\begin{equation}
\HTf = \left( \begin{array}{cccc}
1+(Nx)^2 & 0 & 0 & Nx \\
0 & 1+(Nx)^2 & -Nx & 0 \\
0 & -Nx & 1 & 0 \\
Nx & 0 & 0 & 1 \\
\end{array}\right).
\end{equation}
[Note that the metric is well defined even though the double T-dual
is not a geometric space. Note also that the metric and $B$ field on the
double T-dual may be recovered from $\HTf$ as discussed in the introduction.]

A detailed analysis of D-branes on T-fold backgrounds has been done
in Ref.~\cite{DBTF} Sect.~3.4. A D-brane on such a background is specified
by a projection valued function $\Pi_D$ on $W$ 
(a `Dirichlet projector') which picks out the directions
with Dirichlet boundary conditions on each $\T^{2d}$ fiber over $W$.
This also defines a Neumann projector $\Pi_N = ({\mathbf 1} - \Pi_D)$ on 
the T-fold which determines the Neumann directions on each fiber.
The D-brane boundary conditions are determined by 
$\Pi_D \partial_0 \X = 0$.  We can see that these determine two
sets of boundary conditions $\Pi \, \Pi_D \partial_0 \X = 0$ and
$\tilde{\Pi} \, \Pi_D \partial_0 \X = 0$.
Recall that the $\Pi \X$ were coordinates on $X$ and the $\tilde{\Pi} \X$
were coordinates on the T-dual $X^{\#}$. It is clear that the two sets
of boundary conditions above determine parametrized families of D-branes 
on $X$ and $X^{\#}$ respectively. These may not actually be D-branes as
the boundary conditions may possess a nontrivial monodromy. 
However, one can check that whenever a D-brane wrapping the T-fold
of $\T^3$-with $H$-flux is an allowed D-brane, the projections of
this brane to $X$ and $X^{\#}$ are also allowed D-branes.

We may use this phenomenon to justify the expression for $K_{i}(\Tf)$ 
that we have obtained in Thm. ~\ref{ThmIndKT} above, at least for
$\T^2$ fibrations.  Suppose $\A$ satisfies the hypotheses of
Thm.~\ref{ThmIndKT} above, then the spectrum of $\A$ is a
trivial $\T^2$-bundle over $W$, and the K-theory of the associated T-fold
is 
$$
K_i(\Tf) \cong (K_0(\A))^2 \oplus (K_1(\A))^2\,.
$$
Now, using the Thom isomorphism, this may be written
$$
K_i(\Tf) \cong K_0(\A) \oplus K_0(\CrPr{\A}{\KR^2}{\alpha})
\oplus K_1(\A) \oplus K_1(\CrPr{\A}{\KR^2}{\alpha})\,.
$$
We could interpret this as saying that
a D-brane on $\Tf$ would determine D-branes on $\A$ and 
$\CrPr{\A}{\KR^2}{\alpha}$ and further that the dimension of these
D-branes need not have the same parity as that of the D-brane
on $\Tf$.   A closer study of D-branes on $\Tf$ might give insight
into the formula for $K_i(\Tf)$ in the case $d \geq 2$.

% ===================================================================

\section{Conclusion}

In this paper we have argued that the T-folds in string theory
are naturally related to the noncommutative T-duals of
Ref.~\cite{MR2}: In a particular class of examples, 
the noncommutative T-duals of Ref.~\cite{MR2} may be naturally
obtained from string theoretic data.
It would be interesting to extend Thm.~\ref{ThmZAction} to the case
when $H_3 \neq 0$. One would have to study exterior equivalence classes
of $\KZ^d$-actions on continuous trace algebras on $W$ with nonzero
Dixmier-Douady invariant. It would also be interesting to 
extend the above theorem to nontrivial principal torus bundles.

We suspect that it should be possible to obtain the 
$C^{\ast}$-algebra $\Tf$ from
the data defining the non-principal $\T^{2d}$-bundle $\C$ which is the
correspondence space of Ref.~\cite{HullTf}. 

We have also argued that D-branes on T-folds have an
analogue in the $C^{\ast}$-algebraic formalism. It would be interesting
to pursue these analogies further.

% ===================================================================
\section{Acknowledgements}
We acknowledge financial support from the Australian Research Council 
through the Discovery Project `Generalized Geometries and their Applications'. 

% ===================================================================

% ===================================================================

\end{document}